\newcommand\boldhat{\bm\hat}
\newtheorem{proposition}{Proposition}
\newtheorem{remark}{Remark}
\author{
Mirko Leomanni\thanks{Research Associate, Department of Information Engineering and Mathematics, \texttt{$\!$leomanni@diism.unisi.it}},
Renato Quartullo \thanks{PhD Student, Department of Information Engineering and Mathematics, \texttt{quartullo@diism.unisi.it}},
Gianni Bianchini\thanks{Professor, Department of Information Engineering and Mathematics, \texttt{giannibi@diism.unisi.it}},
\\
Andrea Garulli \thanks{Professor, Department of Information Engineering and Mathematics, \texttt{garulli@diism.unisi.it}},
Antonio Giannitrapani\thanks{Professor, Department of Information Engineering and Mathematics, \texttt{giannitrapani@diism.unisi.it}}\\[1mm]
\textit{University of Siena, Siena, 53100, Italy}
}
\title{Variable-Horizon Guidance for Autonomous\\ Rendezvous and Docking to a Tumbling Target}
\abstract{In this paper, the trajectory planning problem for autonomous rendezvous and docking between a controlled spacecraft and a tumbling target is addressed. The use of a variable planning horizon is proposed in order to construct an appropriate maneuver plan, within an optimization-based framework. The involved optimization problem is nonconvex and features nonlinear constraints. The main contribution is to show that such problem can be tackled effectively by solving a finite number of linear programs. To this aim, a specifically conceived horizon search algorithm is employed in combination with a polytopic constraint approximation technique. The resulting guidance scheme provides the ability to identify favourable docking configurations, by exploiting the time-varying nature of the optimization problem endpoint. Simulation results involving the capture of the nonoperational EnviSat spacecraft indicate that the method is able to generate optimal trajectories at a fraction of the computational cost incurred by a state-of-the-art nonlinear solver.}
\begin{document}
\maketitle

\section{Introduction}

Spacecraft rendezvous and docking (RVD) technologies were tested for the first time in the 1960s within the Gemini and Soyuz programs and later brought to operational status with the advent of manned space stations \cite{woffinden07}. In these programs, RVD was accomplished through manual or semi-automated procedures involving a tight cooperation among the vehicles, heavy instrumentation, and man or ground-in-the-loop interaction to ensure successful maneuvering. In recent years, a demand for new RVD technologies has emerged in the context of small multi-purpose servicing vehicles, which will be capable of autonomously performing a number of complex tasks such as re-fueling, in-orbit repair/assembly and orbital debris capture. Several technology demonstration missions have been carried out for servicing a three-axis stabilized spacecraft, including JAXA’s ETS-VII \cite{kawano01}, NASA’s DART \cite{rumford2003demonstration}, AFRL's XSS-11 \cite{mitchell2006gnc}, and DARPA’s Orbital Express \cite{weismuller2006gn}. However, autonomous RVD to an uncontrolled and possibly tumbling (i.e., rotating) target has yet to be fully demonstrated in orbit, and servicing or capturing an uncooperative target still involves a number of open problems. In particular, there is a need for RVD techniques accounting for rotational motion of the target, which optimize meaningful performance indexes and are easy to implement onboard the spacecraft.

Achieving autonomous RVD involves many complementary operations: inspection, pose estimation, maneuver planning, attitude synchronization, and relative motion control. From a guidance and control perspective, the main challenge to be faced when the target is uncooperative is that the docking point position may vary over time. This leads to the formulation of trajectory planning problems in which both the endpoint and the constraints are time-varying. A wide variety of optimization-based techniques have been proposed in the literature to tackle such type of problems, which in general entail nonlinear optimization methods. In \cite{boyarko2011}, minimum-time and minimum-energy rendezvous trajectories are obtained via direct collocation and validated against the first-order optimality conditions provided by the Pontryagin minimum principle. The approach is refined in \cite{ventura2017} to limit the computational burden. The primary advantage of these nonlinear programming (NLP) methods is that they can account for nonlinearities in the system model, nonconvex constraints, and a free final time. However, NLP is affected by a number of well-known drawbacks, including the lack of convergence guarantees and the requirement of complex solvers. In order to mitigate these issues, researchers in the field have focused on sequential convex programming (SCP) \cite{ping13, liu13, mao2019successive, bonalli19gusto}. Within this approach, all nonconvex elements of the trajectory optimization problem are linearized and the resulting convex problem is solved in a trust region where the linearization is accurate. The process is repeated iteratively until a stopping criterion is met. For certain classes of problems, the asymptotic convergence to a local optimum has been proved \cite{liu14}. However, for general nonlinear problems, the solution sequence may converge to an infeasible trajectory \cite{malyuta2021convex}. This is one of the salient limitations of SCP, and an intensive research is ongoing to overcome this obstacle (see, e.g., \cite{Foust20,ping21}). At present, SCP provides an effective and flexible way to perform rapid trajectory optimization trade studies, see, e.g., \cite{malyuta2020}.

In order to facilitate online optimization, a great deal of research has been directed towards problem formulations which are inherently convex. Convex formulations are usually obtained by linearizing the spacecraft relative motion dynamics, exploiting suitable convex approximations of the RVD path constraints, and adopting a fixed planning horizon \cite{weiss15,zagaris18}. Most of the studies in this area focus on RVD to a cooperative target, assuming that the docking point is static, see, e.g., \cite{hartley13,leomanni20,mammarella20}. Some important contributions have addressed the uncooperative RVD problem. In \cite{breger08safe}, a convex description of the path constraints is introduced and shown to provide much faster solutions compared to a mixed-integer linear programming (MILP) formulation of such constraints (see, e.g., \cite{richards02}). In \cite{dicariano2012}, a model predictive control (MPC) strategy is applied to a planar RVD problem with a tumbling target. This is extended to the three-dimensional case in \cite{li17,dong20}. Although such techniques have nowadays proven to be suitable for implementation onboard a spacecraft (see, e.g., \cite{mammarella20}), their application to RVD missions still faces remarkable challenges. For instance, the use of a fixed planning horizon may not be consistent with the mission requirements, as it prevents from taking into account the maneuver time in the cost function. This has been first pointed out in \cite{richards2006robust,hartley2012}, where a variable-horizon formulation is proposed in order to improve the regulation performance, for RVD maneuvers involving a three-axis stabilized target. However, to the best of our knowledge, variable-horizon approaches tailored to the case of tumbling targets have not been developed to date. This is a serious limitation because the planning horizon dictates the optimization problem endpoint (being the docking point time-varying) and thus the whole maneuver geometry. Therefore, an improper choice of this parameter can lead to severe performance degradation or even infeasibility. In light of these considerations, and taking into account that the characteristics of the target rotational motion are generally not known beforehand, it is reasonable to expect that the horizon length will have to be tuned on orbit. This makes it necessary to adopt a variable-horizon strategy, to be run in real time onboard the spacecraft.

Variable-horizon optimal control problems can be addressed either in a continuous-time or in a discrete-time setting. In the former, a free-final-time problem is converted into a fixed-final-time one by normalizing the time variable. The resulting optimization problem is nonlinear, even for linear dynamical systems. The latter approach amounts to solving a sequence of fixed-horizon problems, in which linearity of the dynamics is preserved. This simplifies the convergence analysis. However, treating the horizon length as an additional decision variable leads to mixed-integer optimization problems that are difficult to solve (see, e.g., \cite{richards2006robust}).

The contribution of this paper is to provide an effective discrete-time solution to the variable-horizon guidance problem, for RVD to a tumbling target. The only source of nonconvexity in the proposed formulation is due to the variable horizon, which is treated as an assignable parameter and weighted in the cost function of the trajectory optimization problem. For any value of the horizon length, the optimization problem is cast a linear program (LP). This is achieved by suitably approximating the RVD constraints. In particular, nonconvex keep-out-zone constraints are approximated by a set of linear time-varying inequalities, using a variant of the so-called rotating hyperplane strategy (see, e.g., \cite{zagaris18}). Then, the solution to the variable-horizon problem is obtained by solving a finite number of LPs. The use of rotating hyperplanes is instrumental to mitigate the computational burden. In fact, for any given horizon length, the proposed approach requires to solve a single LP, while alternative (and less conservative) methods based on constraint linearization (see, e.g., \cite{liu14}) involve the solution of a sequence of convex programs. Another key element of the proposed solution strategy is the construction of a convenient initial guess for the horizon length, around which a local search is performed. The resulting optimization algorithm ensures convergence to a local optimum in a finite and typically small number of steps. A parametric study shows the advantages of this approach with respect to other solution techniques commonly employed for variable-horizon optimization.

The guidance scheme is demonstrated on two simulated maneuvers inspired by the capture of the nonoperational EnviSat spacecraft \cite{louet99}. Simulation results show that the method is able to generate safe RVD trajectories to the tumbling target at a fraction of the computational cost incurred by a state-of-the-art nonlinear solver. Moreover, the obtained results indicate that the employed constraint approximation scheme, although conservative, does not lead to a significant loss in terms of maneuver performance. These features make the proposed approach attractive for autonomous RVD applications, in which the solution to the guidance problem must be computed onboard the spacecraft.

The paper is organized as follows. The variable-horizon guidance problem is formulated in Section~\ref{pbset}, and the RVD constraint model is presented in Section~\ref{docking}. The proposed solution strategy is discussed in Section~\ref{solution}. In Sections~\ref{SAp}-\ref{Ervd} the performance of the method is evaluated numerically and the EnviSat RVD case studies are detailed, while conclusions are drawn in Section~\ref{conclusions}.

\paragraph{Notation:}
The adopted notation is fairly standard. The sets of real, nonnegative real, positive integer, and nonnegative integer numbers are denoted by $\mathbb{R}$, $\mathbb{R}^+$, $\mathbb{N}$, and $\mathbb{N}_0$, respectively. The time derivative of a vector $\mathbf{x}$ is denoted by $\dot{\mathbf{x}}$. The $p$-norm of $\mathbf{x}$ and the direction of $\mathbf{x}$ are indicated by $\| \mathbf{x} \|_p$ and  $\vec{\mathbf{x}}=\mathbf{x}/\| \mathbf{x} \|_2$, respectively. The symbol $\times$ indicates the cross-product operation. The pseudoinverse of a matrix $\mathbf{M}$ is denoted by $\mathbf{M}^\dag$. The set difference operation is denoted by $\setminus$ and the empty set is denoted by $\varnothing$.  The matrix describing a rotation about the axis $\mathbf{a}_x\in\mathbb{R}^3$ by an angle $\theta\in \mathbb{R}$ is denoted by $\mathbf{R}(\mathbf{a}_x,\theta)$.

\par

\section{Problem Formulation}\label{pbset}
The considered spacecraft maneuvering problem is that of RVD between an actively controlled servicer vehicle and a tumbling target. Herein, the focus is on guidance aspects, i.e., on the generation of safe rendezvous and docking trajectories to the target. We restrict our attention to the motion of the center of mass of the servicer (sCM) relative to a time-varying docking point modeling the rotational motion of the target. Hence, the servicer attitude motion is neglected. The rationale behind this approach is that, under reasonable assumptions (namely, the availability of omnidirectional thrust), the translational dynamics of the servicer can be decoupled from its own attitude dynamics \cite{fehse2003}, thus resulting in a simplified guidance algorithm design.

The reference coordinate frame employed in this work is the Radial-Transverse-Normal (RTN) frame centered at the target. The R-axis is aligned to the target radius vector, the N-axis points towards the target orbit normal, and the T-axis completes a right handed triad. According to standard design rules, the dynamics of the sCM with respect to the center of mass of the target (tCM) are expressed in terms of relative position and velocity, by using the normalized discrete-time Hill-Clohessy-Wiltshire equations \cite{clohessy60}
\begin{equation}\label{zetacirctd}
\begin{array}{lll}
\mathbf{x}(k+1)&=&\mathbf{A} \mathbf{x}(k)+ \mathbf{B} \mathbf{\mathbf{u}}(k)\\[2mm]
&=&e^{ \mathbf{A}_c  \tau_s\,} \mathbf{x}(k) +\left(\displaystyle\int_{\tau=0}^{\tau_s}e^{ \mathbf{A}_c \tau}\text{d}\tau\right)\!\mathbf{B}_c\, \mathbf{\mathbf{u}}(k),
\end{array}
\end{equation}
where $k\in \mathbb{N}_0$ is the discrete time index, $\mathbf{x}(k)\in\mathbb{R}^6$ is the system state, $\mathbf{\mathbf{u}}(k)\in\mathbb{R}^3$ is the control input, $\tau_s\in\mathbb{R}^+$ is the sampling interval and
\begin{equation}\label{zetacircmattd}
\mathbf{A}_c=
\left[
\begin{array}{c c c c c c}
0&0&0&1&0&0\\
0&0&0&0&1&0\\
0&0&0&0&0&1\\
3&0&0&0&2&0\\
0&0&0&-2&0&0\\
0&0&-1&0&0&0
\end{array}\right]
\quad\quad
\mathbf{B}_c=\left[
\begin{array}{lll}
0&0 &0\\
0& 0 & 0\\
0&  0&   0 \\
1& 0 &    0 \\
0& 1& 0  \\
0& 0 &1
\end{array}\right]
\end{equation}
The following properties pertain to this representation. The control input is defined as $\mathbf{u}(k)={\mathbf{a}(k)}/{ a_{max}}$, being $\mathbf{a}(k)\in\mathbb{R}^3$ the actual acceleration, expressed in RTN coordinates, and $a_{max}$ the maximum acceleration deliverable by the servicer along each axis of the RTN frame ($a_{max}$ is assumed constant). Notice that, in this setting,
\begin{equation}\label{inpcon}
\| \mathbf{u}(k) \|_\infty\leq 1.
\end{equation}
The state vector is defined as $\mathbf{x}(k)=[\mathbf{x}_p^T(k)\;\mathbf{x}_v^T(k)]^T$, where $\mathbf{x}_p(k)\in\mathbb{R}^3$ describes the RTN components of the relative position vector multiplied by $\eta^2/a_{max}$, and $\mathbf{x}_v(k)\in\mathbb{R}^3$ describes the RTN components of the relative velocity vector multiplied by $\eta/a_{max}$, being $\eta$ the target mean motion. In \eqref{zetacirctd}, a scaled time variable $\tau=\eta t$ is employed, where $t\in\mathbb{R}^+$ is the actual time. Then, $t=k \tau_s/\eta$ at the sampling instants. The choice of the linear time-invariant model \eqref{zetacirctd}-\eqref{zetacircmattd} is appropriate for circular orbits and is made for ease of exposition. However, it is worth stressing that the method presented hereafter applies in general to linear time-varying models, such as the ones describing the relative motion in elliptical orbits.

\begin{figure}[h]
\centering
\includegraphics[width=0.3\textwidth]{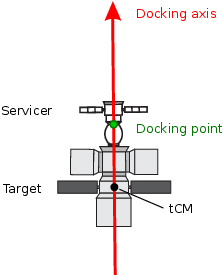}
\caption{Definition of the docking point.}
\label{dkp}
\end{figure}
The docking specifications are modeled by defining a docking axis which is rigidly attached to the target and specifying a suitable docking point along this axis, as illustrated in Fig.~\ref{dkp}. The docking point describes the desired position of the sCM upon docking. Taking into account the target rotational motion, it follows that the
position of the docking point evolves on a sphere of constant radius. The linear position $\mathbf{p}^d\in\mathbb{R}^3$ and velocity $\mathbf{v}^d\in\mathbb{R}^3$ of this point relative to the tCM satisfy the differential equation
\begin{equation}\label{eq2}
\begin{array}{lll}
\dot{\mathbf{p}}^d(t)&=&\mathbf{v}^d(t)\\[3mm]
\mathbf{v}^d(t)&=&\boldsymbol{\omega}(t) \times \mathbf{p}^d(t),
\end{array}
\end{equation}
where $\boldsymbol{\omega}(t)\in\mathbb{R}^3$ is the instantaneous angular velocity of the target body frame relative to the RTN frame. All vectors in \eqref{eq2} are expressed in the RTN frame.  The reference state trajectory $\mathbf{x}^d(k)\in\mathbb{R}^6$ for the guidance problem is obtained by sampling $\mathbf{p}^d(t)$, $\mathbf{v}^d(t)$ and applying the same normalization used to obtain $\mathbf{x}(k)$, resulting in
\begin{equation}\label{eq3}
\mathbf{x}^d(k)=
\left[
\begin{array}{l}
\mathbf{x}_p^d(k)\\[3mm] \mathbf{x}_v^d(k)
\end{array}
\right]
=
\left[
\begin{array}{l}
\dfrac{\eta^2}{a_{max}}\, \mathbf{p}^d\!\left(\dfrac{k \tau_s}{\eta}\right)\\[3mm] \dfrac{\eta}{a_{max}}\, \mathbf{v}^d\!\left(\dfrac{k \tau_s}{\eta}\right)
\end{array}
\right].
\end{equation}

The rendezvous and docking maneuver objective is stated as follows: steer the state $\mathbf{x}(k)$
of system \eqref{zetacirctd} from a given initial condition $\mathbf{x}_0$ at time $k_0$  to the reference trajectory $\mathbf{x}^d(k)$, while minimizing a trade-off between fuel consumption and maneuver time, and satisfying suitable state and input constraints. The proposed guidance scheme achieves this objective through the solution of the following variable-horizon discrete-time optimal control problem:
\begin{equation}\label{fpt2}
\begin{aligned}
\underset{N,\,\mathbf{u}_N}{\text{min}}~~ & J_N= N+ \gamma\|\mathbf{u}_N\|_1 \\[1mm]
\text{s.t.} \quad & {\mathbf{x}}(k_0)= \mathbf{x}_0\\[1mm]
& {\mathbf{x}}(k+1)=\mathbf{A} {\mathbf{x}}(k)+\mathbf{B}{\mathbf{u}}(k)  \\[1mm]
&  {\mathbf{x}}_p(k)\in \mathcal{X}(k,N) \quad k=k_0,\ldots, k_0+N-1\\[1mm]
& \|\mathbf{u}_N\|_\infty \leq 1 \\[1mm]
&  {\mathbf{x}}(k_0+N)=\mathbf{x}^d(k_0+N) \\[1mm]
&  N\in\mathbb{N}
\end{aligned}
\end{equation}
In problem \eqref{fpt2},
\begin{equation*}\label{usequence}
{\mathbf{u}}_N=\begin{bmatrix} {\mathbf{u}}(k_0+N-1)\\ \vdots  \\{\mathbf{u}}(k_0)\end{bmatrix}
\end{equation*}
is a control sequence of variable length $N$ to be optimized. The cost function $J_N$ is the same used in \cite{richards2006robust} and involves the sum of the number of steps $N$ required to steer the initial state $\mathbf{x}_0$ towards the docking state $\mathbf{x}^d(k_0+N)$ and the normalized fuel consumption $\|\mathbf{u}_N\|_1$, which is weighed by the scalar parameter $\gamma\geq0$. The constraint $\|\mathbf{u}_N\|_\infty \leq 1$ enforces a bound for each component of the control sequence according to \eqref{inpcon}. The state constraint sets $\mathcal{X}(k,N)$ are allowed to depend explicitly on both $k$ and $N$. They are assumed to be convex and polytopic for any fixed $k$ and $N$. The particular structure of $\mathcal{X}(k,N)$ is detailed in Section \ref{docking}. Problem \eqref{fpt2} is a nonstandard one in discrete-time optimal control theory, because the number of decision variables and constraints is dictated by the optimization variable $N$.

\section{Rendezvous and Docking Constraints}\label{docking}
In order to safely achieve the rendezvous and docking objective, collisions must be avoided. Moreover, the servicer position must be confined within a suitable visibility region during the final part of the maneuver. A common approach to address such requirements is to introduce a separate set of constraints for the rendezvous and the docking phases.
To this aim, we find it convenient to model the safety constraints enforced on the whole maneuver as $\mathbf{x}_p(k)\in \mathcal{X}(k,N)$, where
\begin{equation}\label{ssconstr3}
\mathcal{X}(k,N)=
\left\{
\begin{array}{ll}
\mathcal{R}(k,\lambda_N),&\quad k < k_0+  \lambda_N\\[2mm]
\mathcal{D}(k),&\quad k  \geq k_0+ \lambda_N
\end{array}
\right.
\end{equation}
In \eqref{ssconstr3}, $\mathcal{R}(k,\lambda_N)$ and $\mathcal{D}(k)$ denote the (time-varying) constraint sets for the rendezvous and the docking phase, respectively, and $\lambda_N$ is the number of sampling instants allocated to the rendezvous phase, within the planning horizon. The latter is treated as an horizon-dependent variable, by setting
\begin{equation}\label{lambdaN}
\lambda_N=N-N_d,
\end{equation}
where the parameter $N_{d}\in \mathbb{N}$ indicates a predefined number of time steps allocated to the docking phase, at the end of the planning horizon. Note that $N_{d}$ is fixed a-priori since docking operations must be usually completed within a fixed amount of time.

\begin{figure}[!t]
	\centering
		\psfrag{a}{\small$\alpha$}
		\psfrag{r}{\small$r$}
	\includegraphics[width=0.8\textwidth]{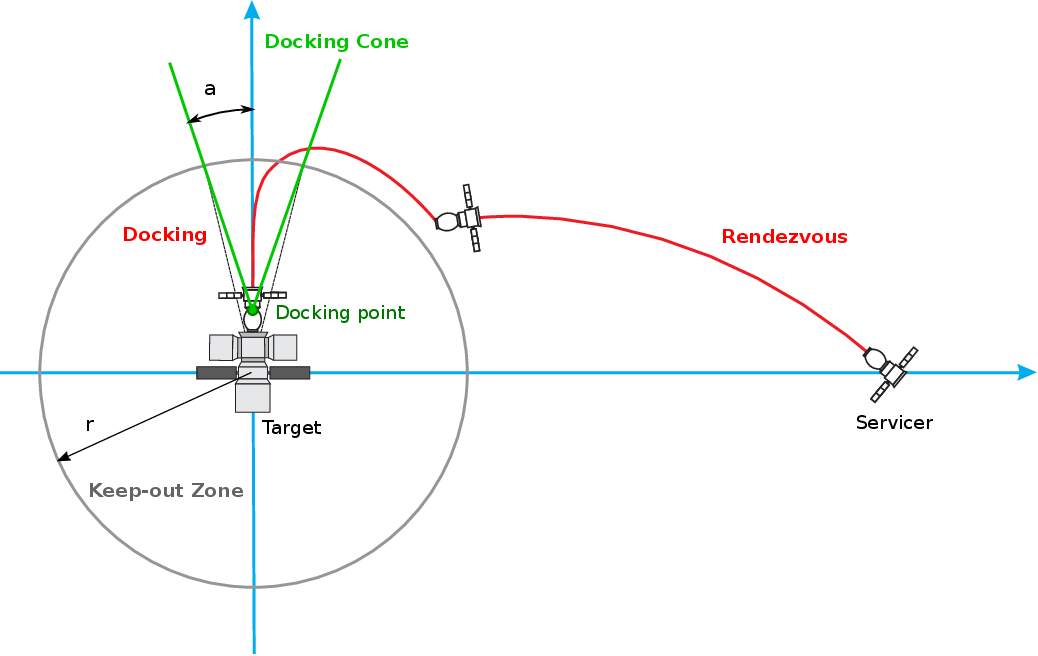}
	\caption{Illustration of a typical rendezvous and docking maneuver. For the sake of illustration, the docking point is assumed to be static.}
	\label{RVDseq}
\end{figure}

\subsection{Rendezvous Constraints}
In the rendezvous phase, collision avoidance constraints are typically modeled by enforcing a keep-out zone of radius $r$ (see Fig.~\ref{RVDseq}), i.e.,
\begin{equation}\label{koz}
\|\mathbf{x}_p(k)\|\geq r.
\end{equation}
The above constraint is nonconvex, and hence not compatible with the formulation \eqref{fpt2}-\eqref{ssconstr3} ($\mathcal{X}(k,N)$ is assumed to be convex). One way to convexify \eqref{koz} is to employ the so-called rotating-hyperplane method \cite{zagaris18}. Within this method, the hyperplane rotation rate is treated as a parameter to be tuned heuristically. The heuristic proposed herein amounts to parameterizing the hyperplane rotation in terms of both the time index $k$ and the horizon length $N$. In particular, the following safety constraint set is enforced:
\begin{equation}\label{hsetpar}
\mathcal{R}(k,\lambda_N)=\Big\{\boldsymbol{\xi}\in\mathbb{R}^3\,:\, \boldsymbol{\xi}^T\vec{\boldsymbol{\nu}}(k,\lambda_N)\geq r \Big\},
\end{equation}
which defines a half-space by means of a separating plane passing through the point $r \vec{\boldsymbol{\nu}}(k,\lambda_N)$ with outward unit normal $\vec{\boldsymbol{\nu}}(k,\lambda_N)$. The vector $\vec{\boldsymbol{\nu}}(k,\lambda_N)$ is obtained from the geodesic joining the projections on the unit sphere of the initial position ${\mathbf{x}}_p(k_0)$ and the reference position at the end of the rendezvous phase ${\mathbf{x}}_p^{d}(k_0 +\lambda_N)$, as
\begin{equation}\label{nukd1}
\begin{array}{l l l}
\vec{\boldsymbol{\nu}}(k,\lambda_N)&=&\mathbf{R}\big(\mathbf{a}_x(\lambda_N),\theta(k,\lambda_N)\big)\,\vec{\mathbf{x}}_p(k_0),
\end{array}
\end{equation}
where
\begin{equation}\label{nukd2}
\mathbf{a}_x(\lambda_N)=\vec{\mathbf{x}}_p(k_0)\times \vec{\mathbf{x}}_p^{\,d}(k_0 +\lambda_N)
\end{equation}
and
\begin{equation}\label{nukd3}
{\theta}(k,\lambda_N)=\dfrac{k-k_0}{\lambda_N}\arccos(\vec{\mathbf{x}}_p^{\,T}\!(k_0)\,\vec{\mathbf{x}}_p^{\,d}(k_0 +\lambda_N)).
\end{equation}

The set $\mathcal{R}(k,\lambda_N)$ in \eqref{hsetpar}-\eqref{nukd3} has the property that the constraint $\mathbf{x}_p(k)\in \mathcal{R}(k,\lambda_N)$ implies \eqref{koz}.  With respect to \cite{weiss15,zagaris18}, the half-space $\mathcal{R}(k,\lambda_N)$ has been designed in such a way that its gradual rotation favours the transition from the initial position ${\mathbf{x}}_p(k_0)$ to the reference position ${\mathbf{x}}_p^{d}(k_0 +\lambda_N)$, where the docking phase will start. Notice that the time scale of such rotation, and consequently the conservativeness of the keep-out-zone approximation, is affected by the horizon-dependent variable $\lambda_N$. For any fixed $\mathbf{x}_p^d$, the approximation gets better for increasing values of $\lambda_N$ (see Fig.~\ref{conill}), and thus of $N$ (see \eqref{lambdaN}).
	
	\begin{figure}[!t]
		\centering
		\psfrag{a}{\small$\lambda_N=1$}
		\psfrag{b}{\small$\vec{\mathbf{x}}_p(k_0)$}
		\psfrag{c}{{\color{red}\small$\mathcal{R}(k_0,1)$}}
	    \psfrag{k}{{\color{red}\small$\mathcal{R}(k_0,2)$}}
		\psfrag{d}{{\color{red}\small$\mathcal{R}(k_0+1,1)$}}
		\psfrag{e}{\small$\vec{\mathbf{x}}_p^{\,d}(k_0 +1)$}
		\psfrag{f}{\small$\lambda_N=2$}
		\psfrag{g}{{\color{red}\small$\mathcal{R}(k_0+1,2)$}}
		\psfrag{h}{{\color{red}\small$\mathcal{R}(k_0+2,2)$}}
		\psfrag{i}{\small$\vec{\mathbf{x}}_p^{\,d}(k_0 +2)$}
		\includegraphics[width=0.8\textwidth]{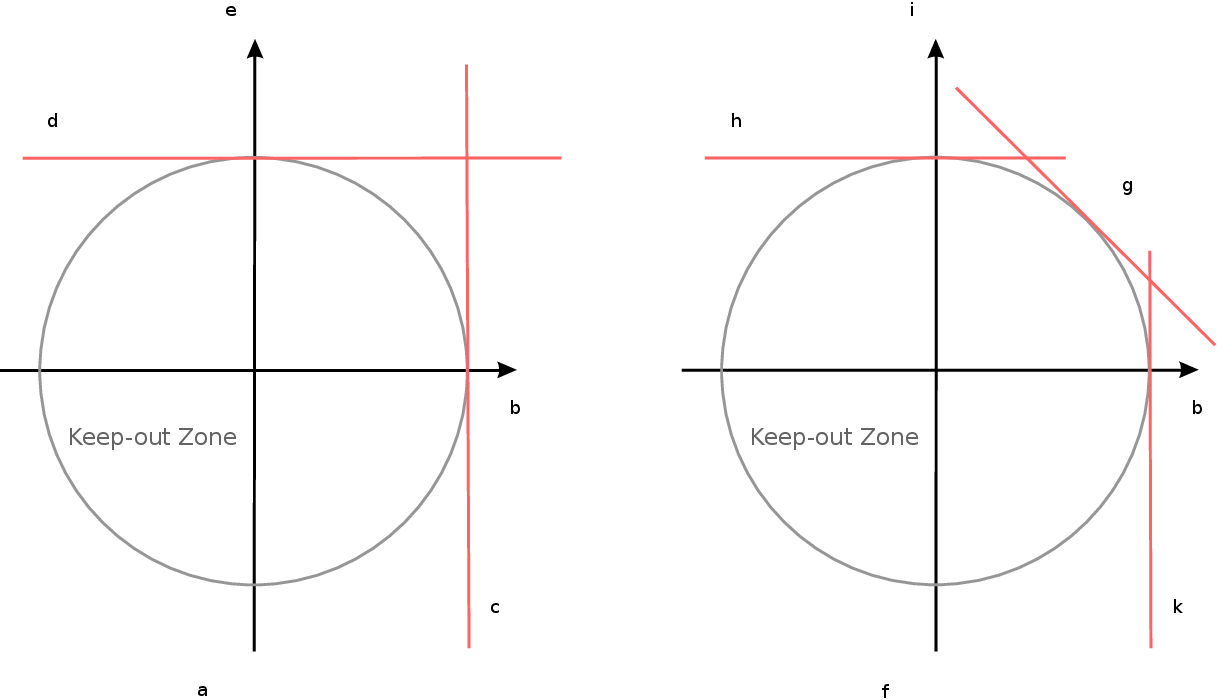}
		\caption{Illustration of the keep-out-zone approximation scheme on a two-dimensional example, for different values of $\lambda_N$.}
		\label{conill}
	\end{figure}

\subsection{Docking Constraints}
In order to define the constraints for the docking phase, let us consider the time varying set
\begin{equation}\label{dockcorr}
\mathcal{C}(k)=\Bigg\{\boldsymbol{\xi}\in\mathbb{R}^3\,:\, \left\|\boldsymbol{\xi}-[\boldsymbol{\xi}^T \vec{\mathbf{x}}_p^{\,d}(k)]{\vec{\mathbf{x}}_p^{\,d}}(k)\right\|_2 \leq \tan(\alpha) [\boldsymbol{\xi}-\mathbf{x}_p^d(k)]^T \vec{\mathbf{x}}_p^{\,d}(k) \Bigg\},
\end{equation}
which consists of a cone stemming from the docking point $\mathbf{x}_p^d(k)$, whose half-angle amplitude is defined by the parameter $\alpha$ (see again Fig.~\ref{RVDseq}). The visibility condition during docking can be defined as $\mathbf{x}_p(k)\in \mathcal{C}(k)$, which represents a convex quadratic constraint. For the sake of computational performance, however, it is convenient to devise a polyhedral constraint set implying $\mathbf{x}_p(k)\in \mathcal{C}(k)$. In order to accomplish this, we exploit the inequality
\begin{equation}\label{lineq}
\|\mathbf{y}\|_2 =\| \mathbf{T} \mathbf{y}\|_2\leq \sqrt{\rho}\| \mathbf{T} \mathbf{y}\|_\infty ,
\end{equation}
which holds for any $\mathbf{y}\in\mathbb{R}^3$ and any orthogonal matrix $\mathbf{T}\in\mathbb{R}^{3\times 3}$, being $\rho$ the number of nonzero elements in $\mathbf{T}\mathbf{y}$.
Let us consider the (orthogonal) rotation matrix $\mathbf{T}(k)$ which takes ${\mathbf{x}}_p^{\,d}(k)$ to a basis vector of the RTN frame. Then, $\mathbf{T}(k)\{\boldsymbol{\xi}-[\boldsymbol{\xi}^T \vec{\mathbf{x}}_p^{\,d}(k)]{\vec{\mathbf{x}}_p^{\,d}}(k)\}$ has at most two nonzero elements, since $\boldsymbol{\xi}-[\boldsymbol{\xi}^T \vec{\mathbf{x}}_p^{\,d}(k)]{\vec{\mathbf{x}}_p^{\,d}}(k)$ and $\vec{\mathbf{x}}_p^{\,d}(k)$ are orthogonal.
Therefore, introducing the polyhedral set
\begin{equation}\label{dockcorrpoly}
\mathcal{D}(k)=\Bigg\{\boldsymbol{\xi}\in\mathbb{R}^3\,:\, \left\|\mathbf{T}(k)\left( \boldsymbol{\xi}-[\boldsymbol{\xi}^T \vec{\mathbf{x}}_p^{\,d}(k)]{\vec{\mathbf{x}}_p^{\,d}}(k)\right)\right\|_\infty \leq \dfrac{\tan(\alpha)}{\sqrt{2}} [\boldsymbol{\xi}-\mathbf{x}_p^d(k)]^T \vec{\mathbf{x}}_p^{\,d}(k) \Bigg\},
\end{equation}
we get from \eqref{lineq} that $\mathcal{D}(k) \subseteq \mathcal{C}(k)$,  as illustrated in Fig.~\ref{dckcon}.
Hence, the linear constraint set $\mathbf{x}_p(k)\in\mathcal{D}(k)$ implies the visibility condition $\mathbf{x}_p(k)\in\mathcal{C}(k)$.
In this work, the matrix $\mathbf{T}(k)$ is chosen as
\begin{equation}\label{tmat}
\mathbf{T}(k)=\mathbf{R}\Big(\vec{\mathbf{x}}_p^{\,d}(k)\times [1\; 0\; 0]^T,\,\arccos([1\; 0\; 0]\vec{\mathbf{x}}_p^{\,d}(k))\Big) .
\end{equation}

Albeit not discussed here for conciseness, other types of maneuver constraints naturally fit the proposed formulation. For instance, plume-impingement constraints can be accommodated for by enforcing a time-varying bound on the control input magnitude, see \cite{breger08safe}. Note that plume impingement is less likely to occur when the docking point is spinning, as the servicer thrust is mostly directed away from the target during the final approach, in order to compensate for centrifugal effects.

	\begin{figure}[t]
		\centering
		\psfrag{a}{{\color{blue}\small$\mathcal{C}(k)$}}
		\psfrag{b}{{\color{red}\small$\mathcal{D}(k)$}}
		\psfrag{T}{\small $\mathbf{x}_p^d(k)$}
		\includegraphics[width=0.35\textwidth]{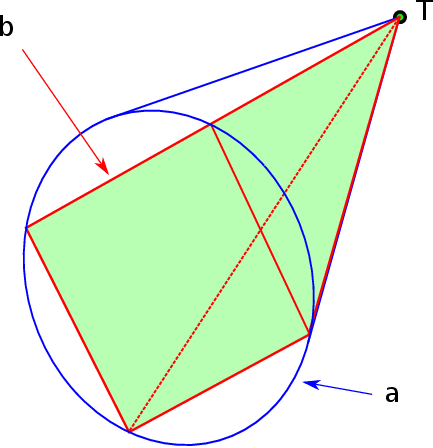}
		\caption{Illustration of the polyhedral set $\mathcal{D}(k)$ used to approximate the docking cone $\mathcal{C}(k)$.}
		\label{dckcon}
	\end{figure}

\section{Solution Strategy}\label{solution}
Problem~\eqref{fpt2} is nonconvex due to the integer optimization variable $N$. Nevertheless, it can readily be tackled by solving a sequence of linear programming problems of the form
\begin{equation}\label{fpt3}
\begin{aligned}
\underset{\mathbf{u}_N}{\text{min}}~~ &{J}_{N}= {N}+ \gamma\|\mathbf{u}_{N}\|_1 \\[1mm]
\text{s.t.} \quad & {\mathbf{x}}(k_0)= \mathbf{x}_0\\[1mm]
& {\mathbf{x}}(k+1)=\mathbf{A} {\mathbf{x}}(k)+\mathbf{B}{\mathbf{u}}(k)  \\[1mm]
&  {\mathbf{x}}_p(k)\in \mathcal{X}(k,N) \quad k=k_0,\ldots, k_0+{N}-1\\[1mm]
& \|\mathbf{u}_{N}\|_\infty \leq 1 \\[1mm]
&  {\mathbf{x}}(k_0+N)=\mathbf{x}^d(k_0+N)
\end{aligned}
\end{equation}
where the planning horizon ${N}$ is now a fixed parameter. We denote the minimizer of \eqref{fpt3} by $\mathbf{u}_{N}^*$, and the corresponding optimal cost by ${J}^*_{N}$. If the problem is infeasible, the cost is set by definition to ${J}^*_{N}=\infty$. In order to find an optimal solution of problem \eqref{fpt2}, one can solve \eqref{fpt3} to obtain the function ${J}^*_{N}: \mathbb{N}\rightarrow \mathbb{R}^+$, and then minimize ${J}^*_{N}$ with respect to $N\in\mathcal{I}=\{1,\ldots,N_{ub}\}$, where $N_{ub}$ is an upper bound on the optimal horizon $N^*$ of \eqref{fpt2}. Being ${J}_{N}$ unbounded for $N\rightarrow\infty$, the upper bound $N_{ub}$ is guaranteed to be finite.

The profile of ${J}^*_{N}$ versus $N$ is reported in Fig.~\ref{cfprof} for an example of RVD maneuver. It can be seen that the horizon length $N$ has a profound impact on maneuver performance, and that optimizing over $N$ is a inherently nonconvex problem.
\begin{figure}[t]
	\centering
	\includegraphics[width=0.65\textwidth]{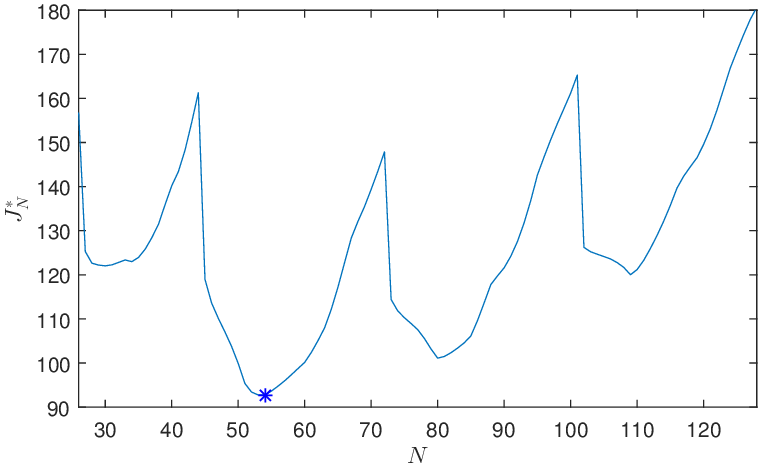}
	\caption{Profile of ${J}^*_{N}$ versus $N$ for the RVD scenario detailed in Section \ref{SAp}, with $\gamma=4$. The presence of multiple local minima is due to the target rotation.  The global optimum $N^*$ is marked by an asterisk. The problem is infeasible for $N\leq 25$.}
	\label{cfprof}
\end{figure}
Hence, searching for the global optimum ${N^*}$ turns out to be as hard as solving \eqref{fpt3} for all $N\in\mathcal{I}$. In order to rule out some of the values of $N$, the following result can be employed.
\begin{proposition}\label{pfeas}
Consider the linear system $\mathbf{x}(k+1)=\mathbf{A} \mathbf{x}(k)+ \mathbf{B} \mathbf{\mathbf{u}}(k)$, where $\mathbf{x}(k)\in\mathbb{R}^n$ and $\mathbf{u}(k)\in\mathbb{R}^m$, and let $\mathbf{x}^d(k_0+N)\in\mathbb{R}^n$ be a target state to be reached from the initial state $\mathbf{x}(k_0)=\mathbf{x}_0$. Define the $N$-step reachability matrix $\mathbf{R}_N=[\mathbf{B}\, \mathbf{A}\mathbf{B}\, \ldots \mathbf{A}^{N-1}\mathbf{B}]$ and the unconstrained minimum energy (least-squares) input sequence $\mathbf{e}_{N}=\mathbf{R}_N^\dag(\mathbf{x}^d(k_0+N)-\mathbf{A}^N \mathbf{x}_0)$. Moreover, let $\mathcal{I}\subseteq\mathbb{N}$ and
\begin{equation}\label{Jset}
\mathcal{F}=\left\{ N\in\mathcal{I}:\,
\begin{array}{r c l}
\mathbf{x}^d(k_0+N)&=&\mathbf{A}^N \mathbf{x}_0+\mathbf{R}_N \mathbf{e}_{N},\\  \|\mathbf{e}_{N}\|_2&\leq& \sqrt{mN}
\end{array}
\right\}
\end{equation}
Then, the feasibility problem
\begin{equation}\label{fpt4}
\begin{aligned}
{\text{find}}~~ & \mathbf{u}_{N} \\[1mm]
\text{s.t.} \quad
& {\mathbf{x}}(k+1)=\mathbf{A} {\mathbf{x}}(k)+\mathbf{B}{\mathbf{u}}(k)  \\[1mm]
& {\mathbf{x}}(k_0)= \mathbf{x}_0,\;\;{\mathbf{x}}(k_0+N)=\mathbf{x}^d(k_0+N) \\[1mm]
& \|\mathbf{u}_{N}\|_\infty \leq 1
\end{aligned}
\end{equation}
has no solution for $N\in\{\mathcal{I} \setminus \mathcal{F}\}$.
\end{proposition}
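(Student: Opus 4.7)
The plan is to argue by contraposition: I will show that if $N \in \mathcal{I}$ fails either of the two conditions in \eqref{Jset}, then no input sequence can simultaneously satisfy the linear dynamics, the terminal equality, and the bound $\|\mathbf{u}_N\|_\infty \leq 1$.

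First I would unroll the linear dynamics over the $N$-step horizon. Using the block ordering of $\mathbf{u}_N$ fixed earlier in the paper, a direct computation yields $\mathbf{x}(k_0+N) = \mathbf{A}^N \mathbf{x}_0 + \mathbf{R}_N \mathbf{u}_N$, so the terminal equality in \eqref{fpt4} is equivalent to the linear system
\begin{equation*}
\mathbf{R}_N \mathbf{u}_N = \mathbf{x}^d(k_0+N) - \mathbf{A}^N \mathbf{x}_0.
\end{equation*}
This system admits any solution at all if and only if its right-hand side lies in the range of $\mathbf{R}_N$. The standard characterization of the Moore--Penrose pseudoinverse says that this range condition is equivalent to $\mathbf{R}_N \mathbf{e}_N = \mathbf{x}^d(k_0+N)-\mathbf{A}^N \mathbf{x}_0$, since $\mathbf{R}_N\mathbf{R}_N^\dag$ is the orthogonal projector onto $\text{range}(\mathbf{R}_N)$. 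This is precisely the first equation defining $\mathcal{F}$. Violation of this condition rules out any feasible $\mathbf{u}_N$ whatsoever, with or without the input bound.

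Next, assuming the range condition holds, I would invoke the minimum-norm property of the pseudoinverse: among all solutions of the above linear system, $\mathbf{e}_N = \mathbf{R}_N^\dag(\mathbf{x}^d(k_0+N)-\mathbf{A}^N \mathbf{x}_0)$ achieves the smallest $\ell_2$ norm. On the other hand, any sequence $\mathbf{u}_N \in \mathbb{R}^{mN}$ with $\|\mathbf{u}_N\|_\infty \leq 1$ satisfies $\|\mathbf{u}_N\|_2 \leq \sqrt{mN}$, because each of its $mN$ scalar components is bounded by $1$ in magnitude. Chaining these bounds, existence of a feasible $\mathbf{u}_N$ would force $\|\mathbf{e}_N\|_2 \leq \|\mathbf{u}_N\|_2 \leq \sqrt{mN}$, contradicting violation of the second condition in $\mathcal{F}$. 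Together with the previous step, this establishes the contrapositive.

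I do not anticipate a serious technical obstacle here: the result reduces to two classical facts about the pseudoinverse (range characterization via $\mathbf{R}_N\mathbf{R}_N^\dag$ and minimum $\ell_2$-norm among consistent solutions) combined with the elementary inequality $\|\cdot\|_2 \leq \sqrt{mN}\,\|\cdot\|_\infty$ on $\mathbb{R}^{mN}$. The only point requiring care is bookkeeping, namely ensuring that the reverse block ordering of $\mathbf{u}_N$ inherited from the earlier definition matches the column ordering of $\mathbf{R}_N=[\mathbf{B}\;\mathbf{AB}\;\cdots\;\mathbf{A}^{N-1}\mathbf{B}]$ used to derive the rolled-out state equation.
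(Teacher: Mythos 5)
Your proposal is correct and follows essentially the same route as the paper's proof: the equality condition in \eqref{Jset} encodes consistency of the rolled-out terminal constraint $\mathbf{R}_N\mathbf{u}_N=\mathbf{x}^d(k_0+N)-\mathbf{A}^N\mathbf{x}_0$, and the inequality follows from the minimum-norm property of $\mathbf{e}_N$ chained with $\|\mathbf{u}_N\|_2\leq\sqrt{mN}\,\|\mathbf{u}_N\|_\infty$, exactly as in \eqref{ineqchain}. Your version merely spells out the pseudoinverse range/projector argument and the block-ordering bookkeeping that the paper leaves implicit.
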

\begin{proof}
The equality in \eqref{Jset} stems from the equality constraints of problem \eqref{fpt4}. In order to prove necessity of the inequality in \eqref{Jset}, observe that $\mathbf{e}_{N}$ is the minimum energy input sequence that drives the system from state $\mathbf{x}_0$ at time $k_0$ to state $\mathbf{x}^d$ at time $k_0+N$. Hence,
\begin{equation}\label{ineqchain}
\|\mathbf{e}_{N}\|_2\leq \|\mathbf{u}_{N}\|_2 \leq \sqrt{mN}\|\mathbf{u}_{N}\|_\infty.
\end{equation}
A necessary condition for problem \eqref{fpt4} to be feasible is that $\|\mathbf{u}_{N}\|_\infty \leq 1$. This can happen only if $\|\mathbf{e}_{N}\|_2 \leq\sqrt{mN}$.
\end{proof}

\begin{remark}\label{remNmin}
By applying Proposition \ref{pfeas} to system \eqref{zetacirctd}, it follows that problem
\eqref{fpt3} has no solution for $N\in\{\mathcal{I} \setminus \mathcal{F}\}$. Moreover, the set $\mathcal{I}$ in \eqref{Jset} is chosen as $\mathcal{I}=\{1,\ldots,N_{ub}\}$. Then, the smallest element of $\mathcal{F}$ is a lower bound on the minimum horizon $N=N_{min}\,$ for which problem \eqref{fpt3} is feasible.
\end{remark}

The minimum energy control for different values $N$ of the horizon length can be evaluated very efficiently, given that the reachability matrix $\mathbf{R}_N$, its pseudoinverse $\mathbf{R}_N^\dag$, and matrix $\mathbf{A}^N$ can be pre-computed offline. However, solving problem \eqref{fpt3} for all $N\in\mathcal{F}$ is still prohibitively complex when the number of elements in $\mathcal{F}$ is large. In order to mitigate this issue, one has to settle for a locally optimal solution. More specifically, our aim is to solve a small subset of problems \eqref{fpt3}, in which the choice of $N$ is guided by a local search within the set $\mathcal{F}$. To this purpose, the following three-step solution method is proposed:
\begin{algorithm}
\caption{Returns a local optimum $\boldhat{N}$ of problem \eqref{fpt2}.}
\label{alg1}
\begin{algorithmic}
   \State{let $\mathcal{F}(q)$ denote the $q$-th largest integer in the
set $\mathcal{F}$}

   \If{$\mathcal{F}=\varnothing$}
   \State \Return{problem \eqref{fpt2} infeasible}
   \Else

   \Comment{Step 1.}
   \State{let $N_1=\underset{N\in\mathcal{F}}{\text{argmin}}\big[N+
\gamma \| E_{N} \|_1\big]$}

   \Comment{Step 2.}
   \State{let $q_1$ be such that $\mathcal{F}(q_1)=N_1$, let $i=0$}
   \Repeat
   \State{solve problem \eqref{fpt3} with $N=\mathcal{F}(q_1+i)$ and
$N=\mathcal{F}(q_1-i), \quad i=i+1$}
   \Until{a feasible solution is found or both the endpoints of
$\mathcal{F}$ are reached}
   \If{a feasible solution is found}
   \State{let $N_2$ be the horizon length}
   \ElsIf{two feasible solutions are found at $i=\zeta$}
   \If{${J}^*_{\mathcal{F}(q_1+\zeta)}<{J}^*_{\mathcal{F}(q_1-\zeta)}$}
   \State{let $N_2=\mathcal{F}(q_1+\zeta)$}
   \Else
   \State{let $N_2=\mathcal{F}(q_1-\zeta)$}
   \EndIf
   \Else
   \State \Return{problem \eqref{fpt2} infeasible}
   \EndIf

   \Comment{Step 3.}
   \State{let $q_2$ be such that $\mathcal{F}(q_2)=N_2$, let $i=1$}
   \If{$N_2>N_1$}
   \Repeat
   \State{solve problem \eqref{fpt3} with $N=\mathcal{F}(q_2+i), \quad
i=i+1$}
   \Until{an index $\varsigma$ is found such that
${J}^*_{\mathcal{F}(q_2+\varsigma)}>{J}^*_{\mathcal{F}(q_2+\varsigma-1)}$}
   \State{let $\boldhat{N}=\mathcal{F}(q_2+\varsigma-1)$}
   \State \Return{$\boldhat{N},\ {J}^*_{\mathcal{F}(q_2+\varsigma-1)}$ }
   \ElsIf{$N_2<N_1$}
   \Repeat
   \State{solve problem \eqref{fpt3} with $N=\mathcal{F}(q_2-i),\quad
i=i+1$}
   \Until{an index $i=\varsigma$ is found such that
${J}^*_{\mathcal{F}(q_2-\varsigma)}>{J}^*_{\mathcal{F}(q_2-\varsigma+1)}$}
   \State{let $\boldhat{N}=\mathcal{F}(q_2-\varsigma+1)$}
   \State \Return{$\boldhat{N},\ {J}^*_{\mathcal{F}(q_2-\varsigma+1)}$ }
   \Else
   \State{let $\underline{J} = \text{min} \{{J}^*_{\mathcal{F}(q_1-1)},\
{J}^*_{\mathcal{F}(q_1)},\ {J}^*_{\mathcal{F}(q_1+1)}\}$}
   \If{$\underline{J} = {J}^*_{\mathcal{F}(q_1-1)}$}
   \State{proceed as for $N_2< N_1$}
   \ElsIf{$\underline{J} = {J}^*_{\mathcal{F}(q_1+1)}$}
   \State{proceed as for $N_2>N_1$}
   \Else
   \State{let $\boldhat{N}=N_1$}
   \State \Return{$\boldhat{N},\ {J}^*_{\mathcal{F}(q_1)}$ }
   \EndIf

   \EndIf

   \EndIf
\end{algorithmic}
\end{algorithm}
\begin{enumerate}
\item Compute an initial guess $N_1$ of the horizon length by evaluating the cost associated with the minimum energy control $\mathbf{e}_{N}$, i.e.,
\begin{equation}\label{initguess}
  N_1=\underset{N\in\mathcal{F}}{\text{argmin}}\big[N+ \gamma \| \mathbf{e}_{N} \|_1\big].
\end{equation}
If multiple minima are found, take the one with the smallest $N$ (among equivalent solutions, the one featuring the smallest $N$ is preferred). Notice that $N=N_1$ is not guaranteed to be feasible for problem \eqref{fpt3}.
\item Starting from $N=N_1$, perform a local search within the set $\mathcal{F}$ until a feasible solution to problem \eqref{fpt3} is found (see Algorithm~\ref{alg1}). This step amounts to solving a sequence of problems \eqref{fpt3} with different values of $N$ in a neighborhood of $N_1$. If a feasible solution is found, denote by $N_2$ the corresponding horizon length. If no feasible solution is found for any $N\in\mathcal{F}$, then mark problem \eqref{fpt2} as infeasible.
\item If $N_2>N_1$, solve a sequence of problems \eqref{fpt3} with increasing horizon length within the set $\mathcal{F}$, starting from $N=N_2$. Stop when the optimal cost of \eqref{fpt3} does not decrease anymore. Similarly, if $N_2<N_1$, apply the same procedure but with decreasing horizon length. If $N_2=N_1$, decide whether to increase or decrease the horizon length by comparing the cost ${J}^*_{N_1}$ with two neighbouring solutions of \eqref{fpt3} within the set $\mathcal{F}$. The optimized horizon length resulting from this step is denoted by $\boldhat{N}$.
\end{enumerate}
Algorithm~\ref{alg1} formalizes the proposed method.

It is worth remarking that Steps~1-3 of Algorithm~\ref{alg1} are motivated by the practical need to trade off performance and computational efficiency. In Step~1, the initial guess $N_1$  is obtained based on the observation that in the domain $\mathcal{F}$ the profile of $N+\gamma\| \mathbf{e}_{N}\|_1$ is often close to that of $N+\gamma\| \mathbf{u}^*_{N}\|_1$, provided by \eqref{fpt3}. In Steps~2-3, a local search is performed in the neighborhood of $N_1$. By virtue of Step 2 a feasible solution of problem \eqref{fpt2} is always found, if it exists. Moreover, according to Step 3, convergence to a local minimum $\boldhat{N}$ of ${J}^*_{N}$ is guaranteed. Global optimality (i.e., $\boldhat{N}=N^*$) is ensured if one of the following conditions is met:
\begin{itemize}
\item[(i)] ${J}^*_{N}$ has no local minima in the domain $\mathcal{F}$;
\item[(ii)] $\gamma=0$, in which case $N_1$ is a lower bound on $N_{min}$ (see \eqref{initguess} and Remark \ref{remNmin}), and Steps $2$-$3$ are guaranteed to find $\boldhat{N}=N^*=N_{min}$.
\end{itemize}
The performance of Algorithm~\ref{alg1} is evaluated in the next section.

\section{Search Algorithm Validation}\label{SAp}
A qualitative assessment of the proposed solution strategy has been carried out by testing Algorithm~\ref{alg1} on a specific RVD scenario, for different values of the parameter $\gamma$ in \eqref{fpt3}. The motivation of this parametric study is that larger values of $\gamma$ usually correspond to longer planning horizons and thus to an increased computational load. This allows one to draw some conclusions regarding the computational feasibility of the method. The RVD maneuver parameters are summarized in Table~\ref{rvdtab}, where $t_0$ denotes the initial time. They are consistent with the specifications of a small satellite mission tailored to the removal of a debris object in Low-Earth-Orbit, see, e.g., \cite{leomanni20}.
\begin{table}[b]
\centering\caption{RVD test maneuver parameters}\vspace{3mm}
\begin{tabular}{ l l }
\hline \hline
Parameter & Value\\ \hline
Initial docking point position & $\mathbf{p}^d(t_0) = [1,\,0,\,0]^T$ m\\ 
Target angular velocity &  $\boldsymbol{\omega}(t) = [0,\,0,\,0.01]^T$ rad/s\\ 
Target mean motion & $\eta = 0.001$ rad/s\\  
Servicer maximum acceleration & $a_{max} = 0.001$ m/s$^2$\\  
Sampling interval  & $\tau_s = 2\pi/256$ rad/sample \\
Initial relative position &  $\frac{a_{max}}{\eta^2}\mathbf{x}_p(k_0) = [0,\,-100,\,0]^T$ m\\ 
Initial relative velocity &  $\frac{a_{max}}{\eta}\mathbf{x}_v(k_0) = [0,\,0,\,0]^T$ m/s\\ 
Docking cone half-angle & $\alpha = 20$ deg\\  
Keep-out zone radius & $\frac{a_{max}}{\eta^2}r = 5$ m  \\  
Docking phase duration &$\frac{\tau_s}{\eta} N_d = 220.9$ s $(N_d=9)$ \\  \hline\hline
\end{tabular}\label{rvdtab}
\end{table}

Within this setup, the performance of Algorithm~\ref{alg1} has been compared with that of full enumeration and binary search methods. The full enumeration approach amounts to solving problem \eqref{fpt3} for all $N\in\mathcal{I}$, and it is guaranteed to find the global optimum $N^*$ of problem \eqref{fpt2}. The binary search method minimizes $J_N^*$ with respect to $N$ by bisection on the interval $\mathcal{I}$. Its complexity is logarithmic in $N_{ub}$, and in general it returns a local minimum of $J_N^*$ . In this study, the horizon upper bound is set to $N_{ub}=128$ samples.

\begin{figure}[!t]
	\centering
	\includegraphics[width=0.65\textwidth]{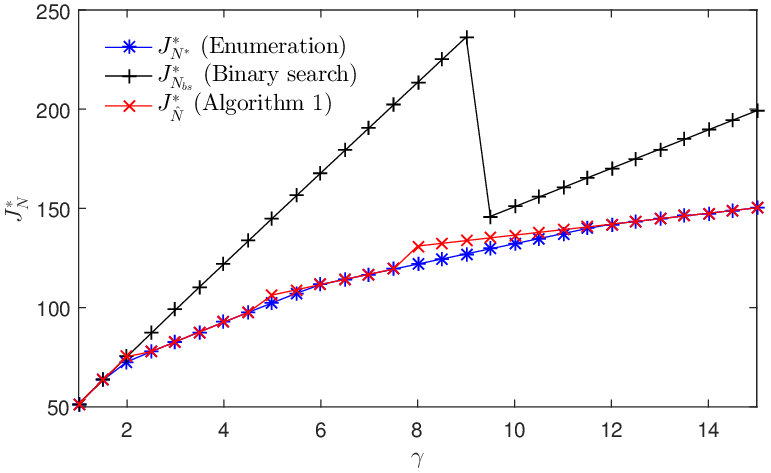}
	\caption{Maneuver cost incurred by the considered search strategies for different values of $\gamma$.}
	\label{fv_gamma}
\end{figure}
\begin{figure}[!t]
	\centering
	\includegraphics[width=0.65\textwidth]{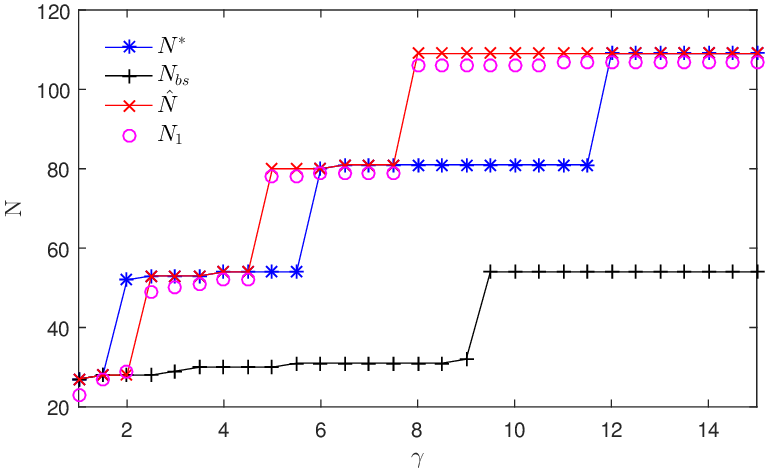}
	\caption{Optimized planning horizon provided by the considered search strategies for different values of $\gamma$. The initial guess $N_1$ of Algorithm~\ref{alg1} is also reported.}
	\label{N_gamma}
\end{figure}
\begin{figure}[!t]
	\centering
	\includegraphics[width=0.65\textwidth]{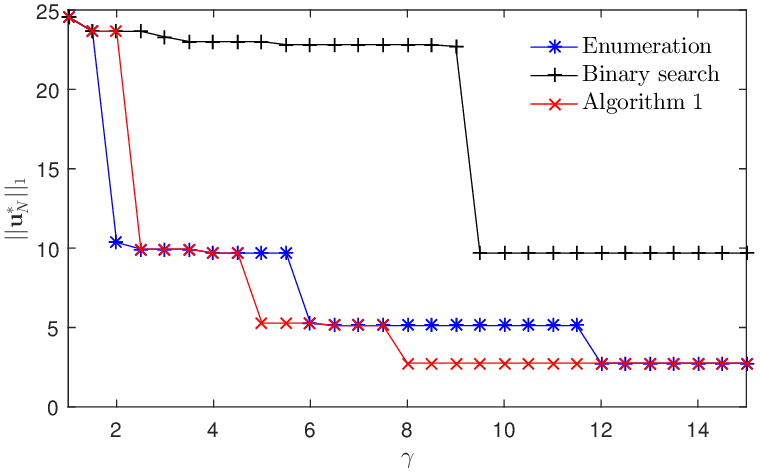}
	\caption{Fuel consumption incurred by the considered search strategies for different values of $\gamma$.}
	\label{u1_gamma}
\end{figure}

The obtained results are reported in Figs.~\ref{fv_gamma}-\ref{u1_gamma} for different values of $\gamma$ in the range $\gamma\in[1,\; 15]$. Figure~\ref{fv_gamma} shows the value function $J^*_N$, evaluated at the horizon length returned by the three compared methods.  It can be seen that the cost $J^*_{\boldhat{N}}$ incurred by Algorithm~\ref{alg1} is either equal or very close to the optimal cost $J^*_{N^*}$ obtained via full enumeration. Conversely, the cost $J^*_{N_{bs}}$ provided by the binary search algorithm is always far from global optimality except for small values of $\gamma$. This is not surprising, since the binary search method is known to work well for the minimization of unimodal functions, while the unimodality condition is not met for the problem at hand (this is evident in Fig.~\ref{cfprof}, which depicts the profile of $J^*_{N}$ corresponding to the parameters in Table~\ref{rvdtab}, for $\gamma=4$). Figures~\ref{N_gamma}~and~\ref{u1_gamma} report, respectively, the optimized horizon length and the normalized fuel consumption $\|\mathbf{u}^*_N\|_1$ incurred by each solution strategy. It can be seen that the binary search solution $N_{bs}$ tends to underestimate $N^*$, while requiring a much higher fuel consumption with respect to the other approaches. Conversely, the horizon length $\boldhat{N}$ provided by Algorithm~\ref{alg1} is very close to $N^*$ in many instances. When $\boldhat{N}> N^*$, the corresponding fuel cost is such that $\|\mathbf{u}^*_{\boldhat{N}}\|_1<\| \mathbf{u}^*_{N^*}\|_1$. Figure~\ref{N_gamma} also depicts the initial guess $N_1$ defined by \eqref{initguess}. It can be seen that the initial guess often falls reasonably close to a local optimum. Table~\ref{a1tab} reports the minimum, maximum, and average CPU time over the considered values of $\gamma$ for the binary search and Algorithm~\ref{alg1} solutions, showing that the computational burden of the two methods is on a similar level.
\begin{table}[!t]
\centering\caption{Computational burden of binary search and Algorithm~\ref{alg1}}\vspace{3mm}
\begin{tabular}{l c c}
\hline \hline
CPU time & Binary search & Algorithm~\ref{alg1}\\ \hline
Minimum & 0.069 s& 0.020 s \\
Average & 0.094 s & 0.104 s\\
Maximum & 0.126 s &0.165 s\\
\hline\hline
\end{tabular}\label{a1tab}
\end{table}
This is a remarkable result, given that $\boldhat{N}\gg N_{bs}$ in most problem instances (see again Fig.~\ref{N_gamma}). The CPU time of the full enumeration procedure is by far higher than that of these two approaches and amounts to approximately 3 s, regardless of the value of $\gamma$. From these results, it can be concluded that the proposed solution strategy provides an excellent trade-off between performance and computational efficiency.

Finally, it is worth noticing that variable-horizon problems can often be cast as a MILP, see, e.g., \cite{richards2006robust}. A MILP formulation of problem \eqref{fpt2} has been tested, but the obtained results turned out to be unsatisfactory. In part, this is due to the fact that the feasible region~\eqref{ssconstr3} is parameterized by an explicit function of $N$. In order to cope with this issue, one has to construct a MILP including all realizations of $\mathcal{X}(k,N)$, obtained for $N\in\mathcal{I}$ and $k=k_0,\ldots, k_0+{N}-1$, resulting in a number of state constraints which is proportional to $N_{ub}^2$. Hence, the problem rapidly becomes untractable as $N_{ub}$ grows. In the considered scenario, even by removing the state constraints, the MILP solution time is in the order of seconds. All computations have been performed on a standard laptop, via a direct call from Matlab of the commercial solver Gurobi \cite{gurobi}.
\section{Rendezvous and Docking to the EnviSat Platform}\label{Ervd}

In recent years, a number of studies have focused on in-orbit servicing missions dedicated to capturing and de-orbiting the European Space Agency (ESA) EnviSat platform, see, e.g., \cite{estable20}. The EnviSat operational life ended on April 8, 2012, following the unexpected loss of contact with the spacecraft. After this event, the spacecraft lost the ability to hold its Earth-pointing attitude and started to tumble. Due to its huge size and its particular orbital configuration, EnviSat is currently regarded as a potential trigger for space debris proliferation in low Earth orbit. In the following, the proposed guidance scheme is demonstrated on an RVD scenario inspired by the capture of EnviSat.

\begin{figure}[!b]
	\centering\vspace{3mm}
	\includegraphics[width=0.65\textwidth]{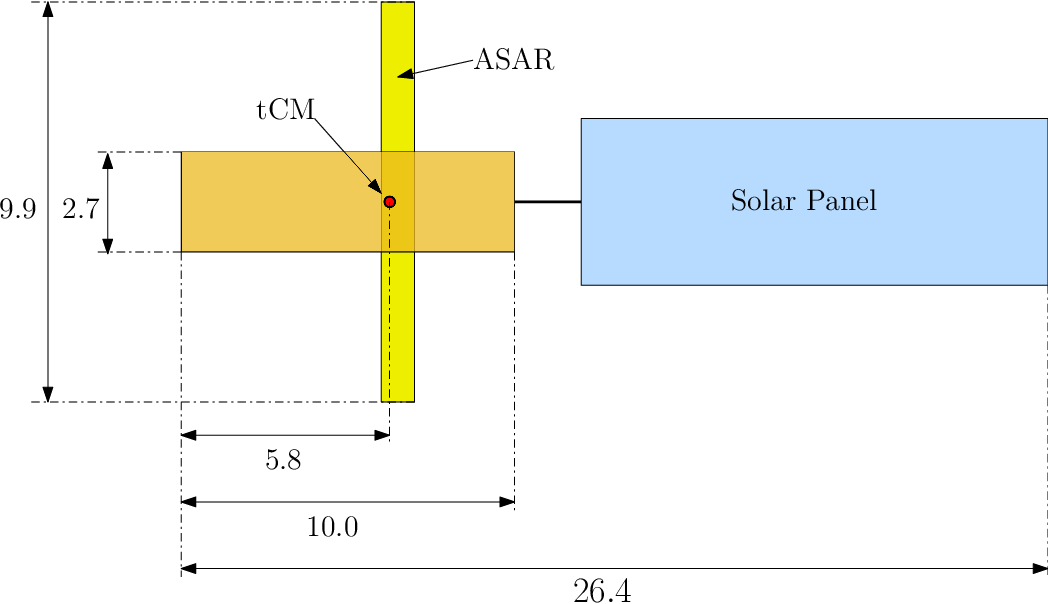}
	\caption{EnviSat spacecraft layout. The length of the various components is reported in meters.}
	\label{envisat}
\end{figure}
\begin{figure}[tb]
	\centering
	\includegraphics[width=0.4\textwidth]{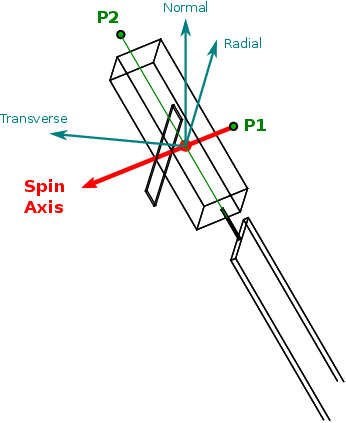}
	\caption{Characterization of the docking points (P1, P2) and of the spin axis in the RTN frame at the initial time $t_0$.}
	\label{after_tumb}
\end{figure}
\begin{table}[!b]
	\centering\caption{EnviSat RVD maneuver parameters}\vspace{3mm}
	\begin{tabular}{ l l }
		\hline \hline
		Parameter & Value\\ \hline
		\multirow{2}{*}{Initial docking point position} & P1: $\mathbf{p}^d(t_0) = [-0.0360,\,-2.6451,\,1.4149]^T$ m\\ & P2: $\mathbf{p}^d(t_0) = [-0.1683,\,3.5384,\,6.6107]^T$ m\\ 
		Initial angular velocity &  $\boldsymbol{\omega}(t_0) = [0.0003,\,0.0252,\,-0.0145]^T$ rad/s\\ 
		Target mean motion & $\eta = $ 0.001045 rad/s\\
		Servicer maximum acceleration & $a_{max} = 0.005$ m/s$^2$\\  
		Sampling interval  & $\tau_s = 2\pi/512$ rad/sample \\
		Initial relative position &  $\frac{a_{max}}{\eta^2}\mathbf{x}_p(k_0) = [0,\,-200,\,0]^T$ m\\ 
		Initial relative velocity &  $\frac{a_{max}}{\eta}\mathbf{x}_v(k_0) = [0,\,0,\,0]^T$ m/s\\ 
		Docking cone half-angle & $\alpha = 20$ deg\\  
		Keep-out zone radius & $\frac{a_{max}}{\eta^2}r = 22$ m  \\  
		Docking phase duration &$\frac{\tau_s}{\eta} N_d = 187.8$ s $(N_d=16)$ \\
		Weighting parameter $\gamma$ & $\gamma=4$\\ \hline\hline
	\end{tabular}\label{rvdtabenvi}
\end{table}

A schematic view of EnviSat is reported in Figure \ref{envisat}. The spacecraft was not designed with servicing in mind, and features an elongated shape with many protruding elements. Consequently, the determination of a suitable docking configuration is nontrivial. Following a review of the literature available on the topic (see, e.g., \cite{Deloo15,li2020real}), two favourable docking points have been identified: the first (P1) is located above the center of mass (tCM), in the direction opposite to the ASAR antenna; the second (P2) lies along the spacecraft long axis, in the direction opposite to the solar panel, as depicted in Fig.~\ref{after_tumb}. It is worth recalling that these points describe the desired position of the sCM upon docking (see Fig.~\ref{dkp}). In order to account for the geometrical configuration of the servicer, a clearance of 1.5 m is left between the docking points and the nearby EnviSat surfaces, similarly to what done in \cite{Deloo15}. Another important modeling issue is the characterization of the spin axis. It is generally agreed (see, e.g.,  \cite{kucharski2014attitude}) that the EnviSat spin axis is approximately fixed with respect to the body frame, and aligned with the vector joining P1 and the tCM, as depicted in Fig.~\ref{after_tumb}. Over the RVD maneuver time scale, one can safely assume that the spin axis is also inertially fixed. Under this assumption, the evolution of the angular velocity vector $\boldsymbol{\omega}(t)$ in \eqref{eq2} is given by
\begin{equation}\label{angvel}
\boldsymbol{\omega}(t)=
\left[
\begin{array}{c c c}
\;\cos(\eta (t-t_0))&\sin(\eta (t-t_0))&0 \\
-\sin(\eta (t-t_0))&\cos(\eta (t-t_0))&0 \\
0&0&1 \\
\end{array}
\right]\boldsymbol{\omega}(t_0) ,
\end{equation}
where $\boldsymbol{\omega}(t_0)$ is the angular velocity of the target body frame relative to the RTN frame at $t_0$. Based on the results in \cite{kucharski2014attitude}, the EnviSat spin period is taken as $220$ s, corresponding to an angular rate of $\| \boldsymbol{\omega}(t)\|_2= 0.029$ rad/s. The parameters chosen for the RVD maneuver simulations, reported in Table \ref{rvdtabenvi}, are consistent with the above discussion. Some comments about the selected value $a_{max} = 0.005$ m/s$^2$ of the servicer maximum acceleration are in order. For the problem at hand, reasonable values of $a_{max}$ may range from $10^{-3}$ m/s$^2$ to $10^{-1}$ m/s$^2$, on a rough estimate. The selected value is on the lower end of such interval. This choice is made in order to showcase the proposed method on a challenging optimization problem involving a low control authority. Moreover, the resulting maneuver plan may be realized by using small thrusters, which are lighter and more accurate than larger ones.

\begin{figure}[!t]
	\centering
	\includegraphics[width=0.75\textwidth]{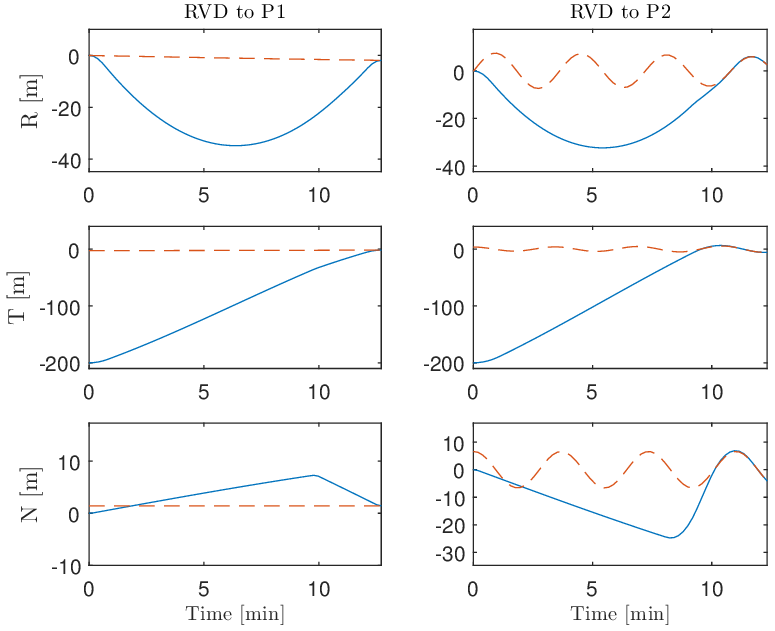}
	\caption{Radial (R), transverse (T) and normal (N) components of the relative position vector for RVD to P1 (left) and P2 (right): actual trajectory (solid) and reference trajectory (dashed).}
	\label{p_envi}
\end{figure}
\begin{figure}[!t]
	\centering
	\includegraphics[width=0.75\textwidth]{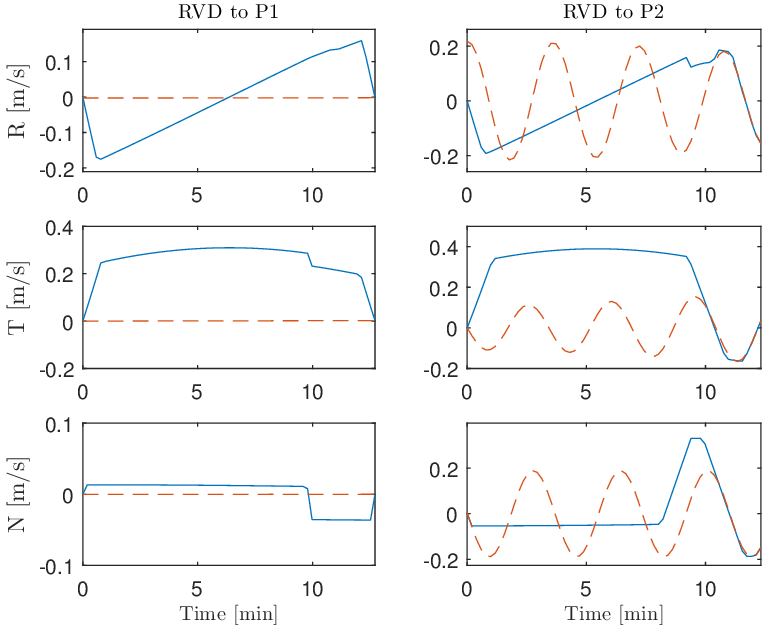}
	\caption{Radial (R), transverse (T) and normal (N) components of the relative velocity vector for RVD to P1 (left) and P2 (right): actual trajectory (solid) and reference trajectory (dashed).}
	\label{v_envi}
\end{figure}
\begin{figure}[!t]
	\centering
	\includegraphics[width=0.75\textwidth]{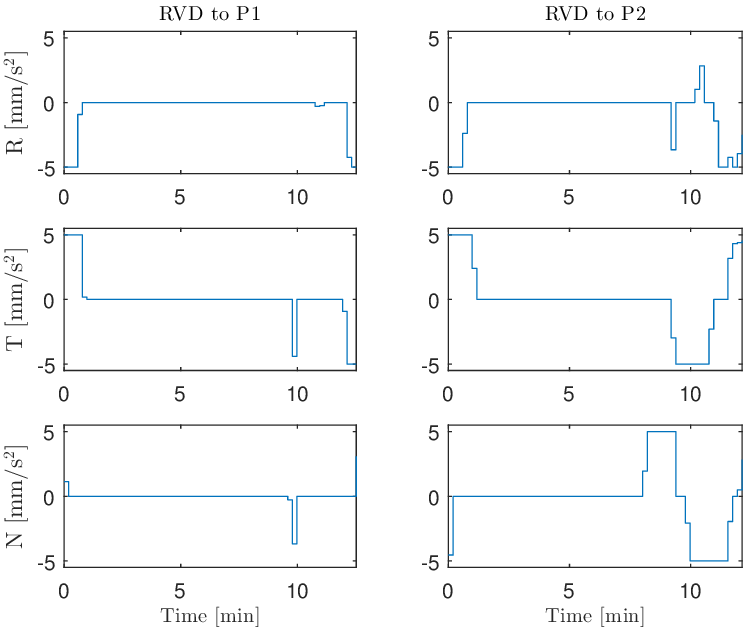}
	\caption{Radial (R), transverse (T) and normal (N) components of the servicer acceleration vector for RVD to P1 (left) and P2 (right).}
	\label{a_envi}
\end{figure}

Problem \eqref{fpt2} has been solved using Algorithm~\ref{alg1} for RVD maneuvers aimed at reaching the docking points P1 and P2. The resulting state and input trajectories are reported in Figs.~\ref{p_envi}-\ref{a_envi}. In both cases, the servicer spacecraft is successfully steered from a hold point located 200 m behind the target (i.e., EnviSat) to the selected docking point, in a time interval of approximately 12 minutes. Notice from Figs.~\ref{p_envi}-\ref{v_envi} that the reference trajectory (dashed) of P2 displays a much faster variation compared to that of P1. This is not surprising since the point P1 lies along the spin axis (see Fig.~\ref{after_tumb}). Consequently, the motion of P1 (as seen from the RTN frame) is only due to the precession of this axis, which occurs at the orbital rate $\eta$ (see \eqref{angvel}). Conversely, the point P2 is orthogonal to the spin axis. Therefore, its evolution in terms of RTN coordinates stems from both the spin axis precession and the rotation of P2 about the spin axis, the latter of which occurs at a frequency much higher (approximately 30 times) than $\eta$. As a result, reaching P2 is far more challenging than reaching P1. In fact, the fuel consumption corresponding to the control input trajectories in Fig.~\ref{a_envi} is about 3 times higher in the P2 case than in the P1 test. In Fig.~\ref{a_envi}, it can also be seen that the magnitude of the acceleration components stays within the assigned bound $a_{max}$ (reported in Table~\ref{rvdtabenvi}), over the entire maneuvering interval. Moreover, the obtained trajectories satisfy by construction the rendezvous and docking constraints described in Section \ref{docking}. Figure \ref{RVDEnvi3Dtrans} depicts the transition from the rendezvous constraints to the docking constraints, showing that feasibility is retained during this event. Figure \ref{RVDEnvi3D} illustrates how the docking corridor rotates during the final part of the maneuver. It can be seen that the sCM always lies inside the set defined by \eqref{dockcorrpoly}.

The obtained trajectories have been compared to those resulting from the solution of a continuous-time version of problem \eqref{fpt2}, in which the final time is free and the state constraints are nonlinear. In particular, the keep-out zone is enforced as in \eqref{koz}, while the docking corridor is modeled as in \eqref{dockcorr}. The transition between the rendezvous and docking phases is accounted for by formulating a two-phase optimal control problem, which is solved by using the commercial package GPOPS-II \cite{patterson2014gpops}.
\begin{figure}[!t]
\centering
\includegraphics[width=0.66\textwidth]{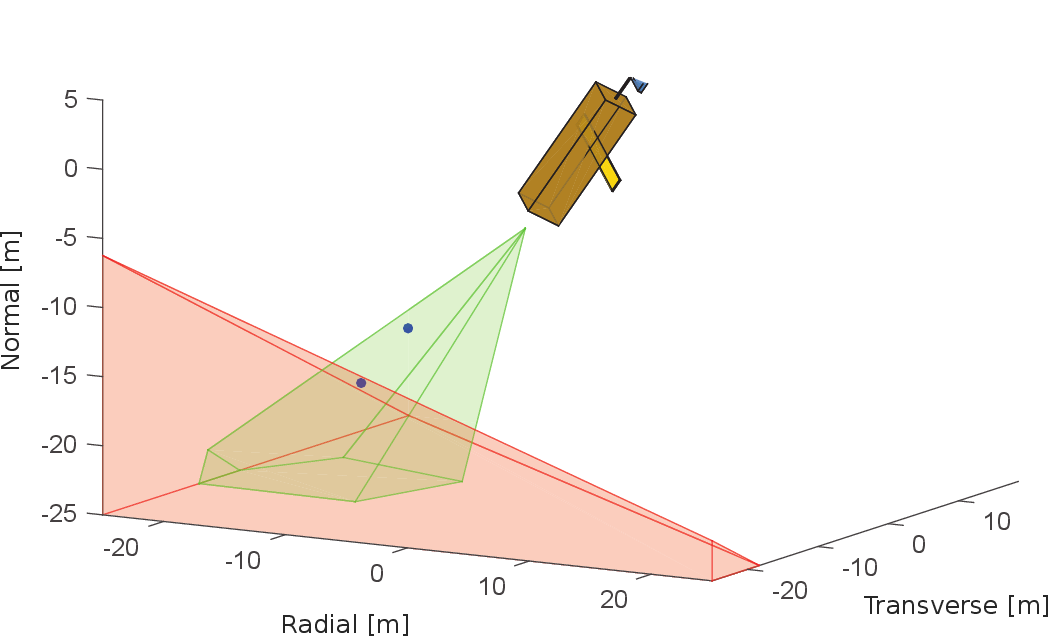}
\caption{Illustration of the transition from the safe rendezvous region (red) to the docking corridor (green), for RVD to P2. The servicer center of mass (sCM) is marked by a blue point.}
\label{RVDEnvi3Dtrans}
\end{figure}
\begin{figure}[!t]
	\centering
	\includegraphics[width=0.66\textwidth]{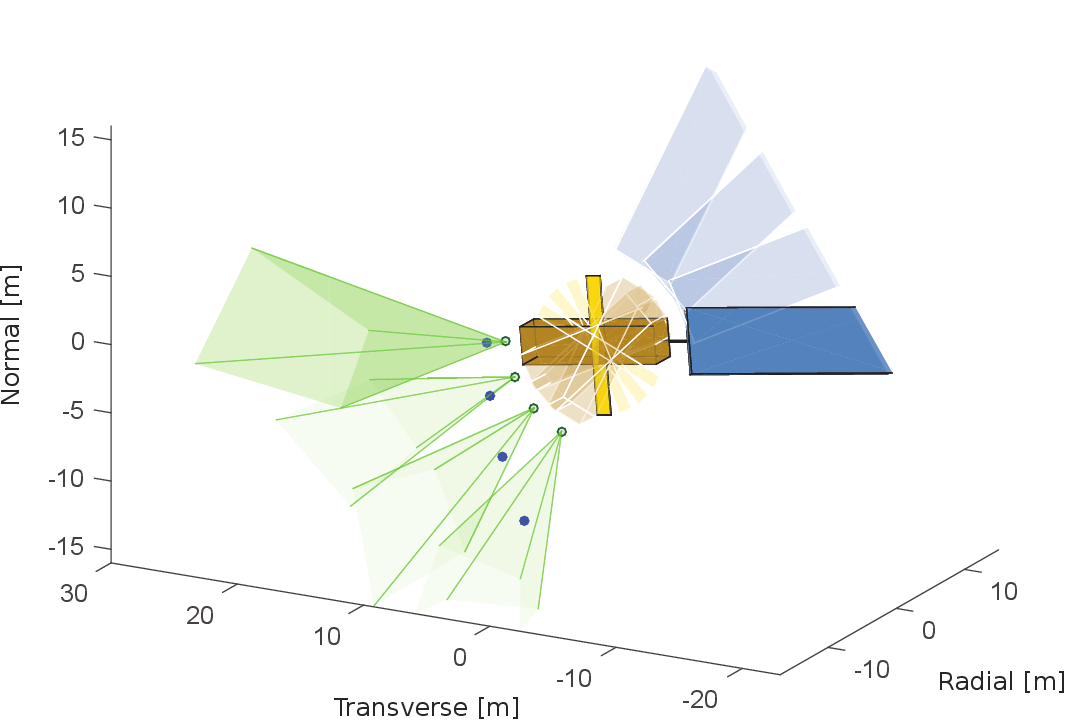}
	\caption{Illustration of the docking corridor rotation during the final approach to P2 (green point). The servicer center of mass (sCM) is marked by a blue point.}
	\label{RVDEnvi3D}
\end{figure}
\begin{table}[!t]
\centering\caption{GPOPS-II and Algorithm~\ref{alg1} performance for EnviSat RVD}\vspace{3mm}
\begin{tabular}{l c r}
\hline \hline
RVD to P1 & GPOPS-II & Algorithm~\ref{alg1}\\ \hline
CPU time & 5.767 s& 0.156 s $\quad$\\
Maneuver cost & 1.5697 & ${J}^*_{\boldhat{N}}\tau_s=$1.5774\\[1mm]
Normalized time-of-flight & 0.7852 & $\boldhat{N}\tau_s=$ 0.7977 \\
Normalized fuel consumption & 0.1961 & $\|{\mathbf{u}}^*_{\boldhat{N}}\|_1 \tau_s=$ 0.1949\\ \hline
RVD to P2 & GPOPS-II & Algorithm~\ref{alg1}\\ \hline
CPU time & 9.319 s& 0.205 s $\quad$\\
Maneuver cost & 2.8721 & ${J}^*_{\boldhat{N}}\tau_s=$ 2.9328\\[1mm]
Normalized time-of-flight & 0.7757 & $\boldhat{N}\tau_s=$ 0.7731 \\
Normalized fuel consumption & 0.5241 & $\|{\mathbf{u}}^*_{\boldhat{N}}\|_1 \tau_s=$ 0.5399\\
\hline\hline
\end{tabular}\label{gptab}
\end{table}
Similarly to what has been done in Section~\ref{solution}, the initial guess for the nonlinear solver is constructed from the unconstrained minimum-energy solution. The results of the comparison are summarized in Table~\ref{gptab}. It can be seen that the Algorithm~\ref{alg1} solution is about 40 times faster than that based on GPOPS-II, while the maneuver cost is approximately the same for the two methods, for RVD to either P1 or P2. In Figure~\ref{cGP}, the cost incurred by the two solutions is compared with the profile of ${J}^*_{N}\tau_s$ for case P2. It can be seen that the nonlinear solution is close to a local optimum of problem \eqref{fpt2}. This indicates that the constraint approximation scheme described in Section \ref{docking} is not overly conservative. On the whole, the obtained results clearly demonstrate the suitability of the proposed approach for autonomous RVD applications. In particular, in all our tests Algorithm~\ref{alg1} returned a solution in a fraction of a second, while the CPU time of the full enumeration, MILP and GPOPS-II approaches was always greater than 3 s. Considering that the sampling time of the guidance scheme is in the order of 10 s, and that spacecraft onboard computers are far less powerful than the employed hardware, the proposed method appears to be the right candidate for real-time implementation.

\begin{figure}[!t]
	\centering
	\includegraphics[width=0.65\textwidth]{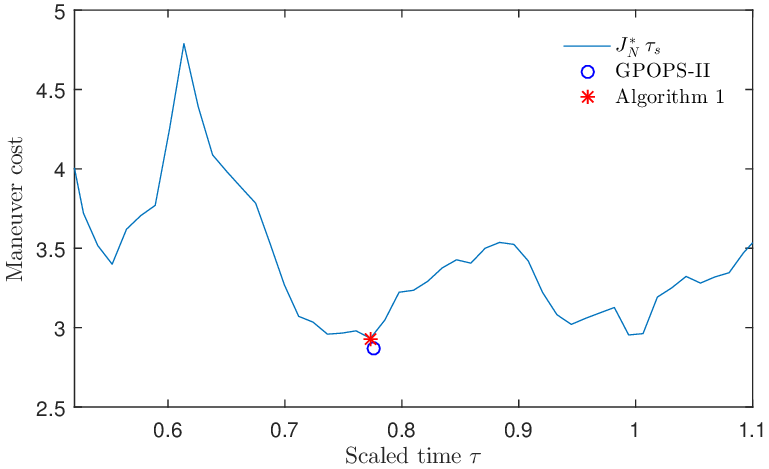}
	\caption{Profile of ${J}^*_{N}\tau_s$, together with the cost incurred by GPOPS-II and Algorithm~\ref{alg1}, for RVD to P2.}
	\label{cGP}
\end{figure}

\section{Conclusions}\label{conclusions}
A variable-horizon guidance scheme has been presented for spacecraft rendezvous and docking applications featuring a tumbling target. As opposed to approaches employing a fixed planning horizon, the proposed formulation provides the ability to identify favourable docking opportunities, which are singled out as local minima of a suitably chosen horizon-dependent cost function. A local optimization strategy has been developed for this new formulation, which is capable of finding high-performance solutions, while incurring a modest computational demand. The method also inherits other advantages of convex formulations, such as guaranteed convergence and ease of implementation. In view of these features, the proposed guidance scheme may be employed either as a standalone module for the autonomous planning (and re-planning) of optimized trajectories to be tracked by the control system, or as a baseline for the development of variable-horizon model predictive control strategies, tailored to uncooperative mission scenarios. The method has been demonstrated on a real-world scenario involving docking with the defunct EnviSat spacecraft, and found to provide remarkable improvements in terms of computational efficiency with respect to a nonlinear solver, while incurring only a negligible performance loss.

\bibliographystyle{aiaa-doi}
\bibliography{RVDtumbling_minimal}
\end{document}